\documentclass[a4paper, 10 pt]{article}

\AtBeginDocument{%
 }

\usepackage{tikz-cd}
\usepackage{tikz}
\usetikzlibrary{automata, arrows.meta, positioning}
\usepackage{float}
\usepackage{bussproofs}
\usepackage{graphicx}%
\usepackage{multirow}%
\usepackage{amsmath}%
\usepackage{amsthm}%
\usepackage{mathrsfs}%
\usepackage{amssymb}%
\usepackage[title]{appendix}%
\usepackage{xcolor}%
\usepackage{textcomp}%
\usepackage{manyfoot}%
\usepackage{algorithm}%
\usepackage{algorithmicx}%
\usepackage{algpseudocode}%
\usepackage{enumerate}
\usepackage{physics}
\usepackage{hyperref} 
\usepackage{wrapfig}
\usepackage[margin=1 in, a4paper]{geometry}
\hypersetup{
 colorlinks=true,
 linkcolor=blue,
 filecolor=magenta,      
 urlcolor=cyan,
 pdftitle={Overleaf Example},
 pdfpagemode=FullScreen,
 }
\urlstyle{same}

% STYLES AND COMMANDS
\theoremstyle{definition}
\newtheorem{dfn}{Definition}[]

\newtheorem{prop}[dfn]{Proposition}
\newtheorem{rem}[dfn]{Remark}
\newtheorem{lem}[dfn]{Lemma}
\newtheorem{cor}[dfn]{Corollary}

\newtheorem{them}[]{Theorem}

% MY Commands

%name of a derived type beginning with S.

\newcommand{\bol}{\mathbf}

\newenvironment{bprooftree}
 {\leavevmode\hbox\bgroup}
 {\DisplayProof\egroup}

\title{Non-commutative linear logic fragments\\ with sub-context-free complexity
}

\author{Yusaku Nishimiya$^{1,2}$ \and Masaya Taniguchi$^2$}
\date{%
    \begin{small}
    $^1$University of Illinois Springfield, IL, USA\\%
    $^2$RIKEN Center for Advanced Intelligence Project (AIP),\\ Tokyo, Japan\\[2ex]%
    %\today 
    \end{small}
}

\begin{document}
\maketitle

\begin{abstract}
We present new descriptive complexity characterisations of classes \textsf{REG} (regular languages), \textsf{LCFL} (linear context-free languages) and \textsf{CFL} (context-free languages) as restrictions on inference rules, size of formulae and permitted connectives in the Lambek calculus; fragments of the intuitionistic non-commutative linear logic with direction-sensitive implication connectives.
Our identification of the Lambek calculus fragments with proof complexity \textsf{REG} and \textsf{LCFL} is the first result of its kind. We further show the \textsf{CFL} complexity of one of the strictly `weakest' possible variants of the logic, admitting only a single inference rule. The proof thereof, moreover, is based on a direct translation between type-logical and formal grammar and structural induction on provable sequents; a simpler and more intuitive method than those employed in prior works. We thereby establish a clear conceptual utility of the Cut-elimination theorem for comparing formal grammar and sequent calculus, and identify the exact analogue of the Greibach Normal Form in Lambek grammar.
We believe the result presented herein constitutes a first step toward a more extensive and richer characterisation of the interaction between computation and logic, as well as a finer-grained complexity separation of various sequent calculi.
\end{abstract}

\textbf{Keywords:}
Formal language, context-free language, substructural logics, linear logic, intuitionistic logic, descriptive complexity, type-logical grammar, formal grammar

\subsection*{Introduction}
\begin{figure}[h!]
  \vspace*{- 1 em}
  \begin{small}
    \begin{center}
      \begin{bprooftree}
        \AxiomC{$\alpha,\alpha \to \beta$}
        \RightLabel{Contraction}
        \UnaryInfC{$\alpha \to \beta$}
      \end{bprooftree}
      \quad
      \begin{bprooftree}
        \AxiomC{$\alpha \to \gamma$}
        \RightLabel{Weakening}
        \UnaryInfC{$\alpha,\beta \to \gamma$}
      \end{bprooftree}
      \quad restricted in \textbf{LL}.
    \end{center}
    
    \begin{center} 
      \begin{bprooftree}
        \AxiomC{$\Gamma, \alpha, \beta, \Delta \to \gamma$}
        \RightLabel{Exchange}
        \UnaryInfC{$\Gamma, \beta, \alpha, \Delta \to \gamma$}
      \end{bprooftree}
      \quad
      \begin{bprooftree}
        \AxiomC{$\Gamma, \alpha,\Theta \to \beta; $}
        \AxiomC{$\Delta \to \alpha$}
        \RightLabel{Cut}
        \BinaryInfC{$\Gamma, \Delta, \Theta \to \beta$}
      \end{bprooftree}
      \quad allowed in \textbf{LL}.
    \end{center}
  \end{small}
  \vspace*{- 1.7 em}
\end{figure}
Proof theorists study substructural logics to understand the effect of admitting or eliminating \textit{structural rules} on the properties of proof systems, usually presented as a sequent calculus.
Of particular interest for computation is linear logic (\textbf{LL}) \cite{girard1987linear} as it restricts Contraction and Weakening rules, making proofs more \textit{resource-conscious} (and thus computation-relevant) than its classical counterpart.

The multiplicative-additive fragment of linear logic (\textbf{MALL})\footnote{Here, we focus exclusively on propositional logic without the \textit{exponential} connectives. For a survey of computability and complexity for more general linear logic and its first-order extension, see~\cite{lincoln1995deciding}.}~\cite{lincoln1992decision}  in which each formula is used exactly once was shown to be \textsf{PSPACE-complete}~\cite{lincoln1992decision} and further restriction by removal of additives, to multiplicative linear logic (\textbf{MLL}) makes the calculus \textsf{NP-complete} \cite{kanovich1991multiplicative}.

Little is known about the computational complexity of proof systems based on even `weaker' fragments of \textbf{LL}, except for the \textsf{NP-complete}ness \cite{pentus2006lambek} of the Lambek calculus (\textbf{L}), the intuitionistic, non-commutative, multiplicative fragment of \textbf{LL} with direction-sensitive implications. \textbf{L} was originally introduced as a proof system for formalising natural language syntax in Lambek's seminal paper~\cite{lambek1958mathematics} but was later shown by Abrusci \cite{abrusci1990comparison} to be a fragment of \textbf{LL} without Exchange, leaving Cut as the sole structural rule.
Chomsky conjectured in~\cite{chomsky1963formal} the equivalence between type-logical grammars based on \textbf{L} and context-free grammars (CFG) and thus began the research into the expressivity of Lambek grammar. 
Pentus confirmed Chomsky's conjecture~\cite{pentus1993lambek} and further proved~\cite{pentus1997product} that removal of multiplicative connective does not change the expressivity, such that the \textit{product-free} Lambek grammar is also context-free. However, no fragments of \textbf{L} corresponding in expressivity to lower classes of formal grammars (equivalently automata) in the Chomsky hierarchy have been identified.

Here, we show that, intuitively, the restriction on the size and directionality of the logical formulae permitted in the proof system, more so than on inference rules, yields the variation in expressivity that corresponds exactly to the difference between context-free, linear context-free and regular grammar. 

\subsection*{Preliminaries}
\subsubsection*{Formal languages}
The expressive power of a class of automata is defined by the formal languages it can parse (i.e. decide the set membership). Formal languages, in turn, may be characterised by the nature of grammatical \textit{production rules} required to generate all strings therein. Here, we consider the class of context-free grammars.

\begin{dfn}
 A context-free grammar (CFG) $\bol{G} = (N, \Sigma, S_\bol{G}, P)$ consists of sets $N$ of \textit{non-terminal symbols}, $\Sigma$ of \textit{terminal symbols}, a \textit{start symbol} $S_\bol{G} \in N$ and $P \subset N \times (\Sigma \cup N)^+$, the set of production rules, where $(\Sigma \cup N)^+$ consists of finite non-empty strings of terminal and non-terminal symbols.

 Any rule $p \in P$ of the form $A \to aB$ (resp. $A \to Ba$) is said to be \textit{right-linear} (resp. \textit{left-linear}).
  
 A CFG is a linear context-free grammar (LCFG) if all production rules are either right or left-linear.

 A (right-) regular grammar (REG) is a context-free grammar all of whose production rules are right-linear. Likewise and equivalently for the left-regular grammar.
\end{dfn}

Intuitively, the context-freeness signifies the independence of the rewriting-rule applicability from the surrounding symbols, whilst the linearity means that the length of the intermediate terminal/non-terminal string increases by at most one during the generation.

\subsubsection*{Lambek calculus}
We define the Lambek calculus \textbf{L} as a calculus with an axiom and rules acting on sequents of the form $\Gamma \to \alpha$.
We shall interchangeably use the words \textit{type} and \textit{formula} to denote propositions.
Capital Greek letters represent a finite sequence of types, lowercase Greek letters a (derived) type and capital Latin letters a primitive type in $Pr$, the finite set of primitive types.\footnote{We use the word `string' for concatenated products of symbols in the alphabet and non-terminal symbols, and `sequence' for comma-separated lists of types.}
For any sequence of types $\Gamma$, we let $|\Gamma|$ be the number of types in the sequence $\Gamma$.
In all systems considered here, all type-connectives are binary, namely, $/,\backslash$ and $\cdot$.
The set of all types $Tp$ is thus defined as the smallest set such that i. $Pr \subseteq Tp$ and ii. for any types $\alpha, \beta \in Tp$, therefrom-derived types, $\alpha / \beta, \alpha \backslash \beta, \alpha \cdot \beta$ are in $Tp$.\footnote{Type constructions are non-associative such that, in fully parenthesised notation, derived types are $(\alpha) / (\beta), (\alpha) \backslash (\beta), (\alpha) \cdot (\beta)$.}
We let $Tp(/)$ be the set of all types in which $/$ is the only type connective that occurs, and likewise for $Tp(\backslash)$ and $Tp(\cdot)$. 
The \textit{degree} of a type $\alpha$, denoted $d(\alpha)$, is the number of distinct occurrences of connectives in $\alpha$, intuitively, a measure for the size of types. 

For $n \in \mathbb{N}$, we let $Tp_n$ be the set of types whose degree is less than or equal to $n$.
\textbf{L} consists of one Axiom, the Cut-rule and six inference rules shown below.
\begin{figure}[h!]
  \vspace*{0 em}
  \begin{center} 
    The Lambek calculus.\\\vspace*{0.5 em}

  \begin{bprooftree}
      \AxiomC{$\vphantom{\Gamma,;}$}
      \RightLabel{Axiom}
      \UnaryInfC{$\alpha \to \alpha$}
  \end{bprooftree}
  \qquad
  \begin{bprooftree}
    \AxiomC{$\Gamma, \alpha,\Theta \to \beta; $}
    \AxiomC{$\Delta \to \alpha$}
    \RightLabel{Cut}
    \BinaryInfC{$\Gamma, \Delta, \Theta \to \beta$}
  \end{bprooftree}
  \end{center}

  \begin{center}
  \begin{bprooftree}
    \AxiomC{$\alpha, \Gamma \to \beta$}
    \RightLabel{($\to \backslash$), where $\Gamma \neq \Lambda$}
    \UnaryInfC{$\Gamma \to \alpha \backslash \beta$}
  \end{bprooftree}
  \qquad
  \begin{bprooftree}
    \AxiomC{$\Gamma \to \alpha;$}
    \AxiomC{$\Delta, \beta, \Theta \to \gamma$}
    \RightLabel{$(\backslash \to)$}
    \BinaryInfC{$\Delta, \Gamma, (\alpha \backslash \beta), \Theta \to \gamma$}
  \end{bprooftree}
  \end{center}

  \begin{center}
  \begin{bprooftree}
    \AxiomC{$\Gamma, \alpha \to \beta$}
    \RightLabel{($\to /$), where $\Gamma \neq \Lambda$}
    \UnaryInfC{$\Gamma \to \beta/\alpha$}
  \end{bprooftree}
  \qquad
  \begin{bprooftree}
    \AxiomC{$\Gamma \to \alpha;$}
    \AxiomC{$\Delta, \beta, \Theta \to \gamma$}
    \RightLabel{$(/ \to)$}
    \BinaryInfC{$\Delta, (\beta/\alpha), \Gamma, \Theta \to \gamma$}
  \end{bprooftree}
  \end{center}

  \begin{center}
  \begin{bprooftree}
    \AxiomC{$\Gamma \to \alpha;$}
    \AxiomC{$\Delta \to \beta$}
    \RightLabel{($\to \cdot$)}
    \BinaryInfC{$\Gamma, \Delta \to \alpha \cdot \beta$}
  \end{bprooftree}
  \qquad
  \begin{bprooftree}
    \AxiomC{$\Gamma, \alpha, \beta, \Delta \to \gamma$}
    \RightLabel{$(\cdot \to)$}
    \UnaryInfC{$\Gamma, (\alpha \cdot \beta),\Delta \to \gamma$}
  \end{bprooftree}\\
  \vspace{0.5 em}
  \begin{small}
     We let $\Lambda$ be the empty sequence of types. 
  \end{small}
  \end{center}
  \vspace*{- 1 em}
\end{figure}

A \textit{fragment} of \textbf{L} is a sequent calculus with the Axiom, Cut and some but not all of the six inference rules. We let $\bol{L}(/ \to, \to /)$ denote the fragment of \textbf{L} with $(/ \to)$ and $(\to /)$ rules, as an example. We now define Lambek grammar.
\begin{dfn}
  A Lambek grammar $\mathcal{G}$ is a quadruple $(Pr,V, S_\mathcal{G}, f)$, with the set of primitive types $Pr$, the finite set of symbols or \textit{alphabet} $V$, the \textit{distinguished type} $S_\mathcal{G} \in Pr$ and the type assignment function $f:V \to \Omega^{Tp}$, where $\Omega^{Tp}$ is the powerset of $Tp$. $f$ is naturally extended to strings; $f^+:V^+ \to \Omega^{Tp^+}$ 
  defined by $\forall w \in V^+ \text{ s.t. } w = a_1 ...a_n, \quad  f^+(w) = \{\Gamma \in Tp^+ | \Gamma = \alpha_1 ... \alpha_n \text{ s.t. } \forall k, \alpha_k \in f(a_k) \} $ 
  where $V^+$ is the set of all finite non-empty strings of symbols in $V$ and $Tp^+$ is the set of all finite non-empty sequences of types.

  The \textit{language $\mathcal{L}$ recognised by} $\mathcal{G}$ is a subset of $V^+$, such that for any $w \in V^+$, $w \in \mathcal{L}$ iff $\exists \Gamma \in f^+(w)$ and $ \bol{L} \vdash \Gamma \to S_\mathcal{G}$ (\textit{i.e.} any given string is in the language iff there is a sequence of types assigned to it which is \textit{reducible} to $S_\mathcal{G}$ in $\bol{L}$).
  The grammar and language for fragments of $\bol{L}$ are analogously defined.
\end{dfn}

In the present work, we identify the fragments of Lambek calculus $\bol{L}$ with equivalent expressivity to context-free grammar subclasses by constructing a suitable Lambek grammar. 

\subsection*{Main results}
The construction of corresponding grammars relies on structural inductions on the sequent, which in turn requires the existence of a Cut-free proof for any provable sequents, guaranteed by the \textit{Gentzen's Theorem} in \cite{lambek1958mathematics}.
\begin{them}\label{Cut-Elim} \cite{lambek1958mathematics}
 The elimination of Cut from $\bol{L}(/\to)$ does not change the set of provable formulae and likewise holds for $\bol{L}(/ \to, \backslash \to)$.
\end{them}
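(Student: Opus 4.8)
The plan is to run the standard Gentzen-style Cut-elimination argument while exploiting the fact that neither $\bol{L}(/\to)$ nor $\bol{L}(/\to,\backslash\to)$ contains any \emph{right} introduction rule --- only the left rules $(/\to)$ and $(\backslash\to)$. The usual double induction (on the degree of the cut formula and on the heights of the two premise-derivations) therefore collapses: it suffices to prove that Cut is admissible in the Cut-free fragment by a single induction on the height of the \emph{right} premise alone, since the cut formula never needs to be broken down. Concretely, I would first establish the admissibility lemma: if both $\Gamma,\alpha,\Theta \to \beta$ and $\Delta \to \alpha$ are provable in the Cut-free fragment, then so is $\Gamma,\Delta,\Theta \to \beta$. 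Full Cut-elimination then follows by repeatedly picking an uppermost Cut inference (one with no Cut above it, so that its two immediate sub-derivations are already Cut-free) and replacing it via the lemma, each replacement strictly lowering the number of Cut inferences in the proof.

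For the lemma I would induct on the height $h$ of the given Cut-free derivation $\mathcal{D}$ of $\Delta \to \alpha$. If $h$ is minimal then $\mathcal{D}$ is the axiom $\alpha \to \alpha$, so $\Delta$ is the one-element sequence $\alpha$ and the goal $\Gamma,\Delta,\Theta\to\beta$ is literally the other given premise. Otherwise the last rule of $\mathcal{D}$ is $(/\to)$ or $(\backslash\to)$, and in either schema (as displayed in the excerpt) the succedent $\alpha$ is simply inherited, unchanged, from the second premise. I would therefore invoke the induction hypothesis to cut $\Gamma,\alpha,\Theta\to\beta$ against that second premise --- the cut formula is still $\alpha$, but the relevant right-premise derivation now has strictly smaller height --- and then re-fire the very same instance of $(/\to)$ or $(\backslash\to)$ below the new cut; a short check of how the sequences $\Gamma,\Delta,\Theta$ recombine confirms that the rule applies and yields exactly $\Gamma,\Delta,\Theta\to\beta$. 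Exactly these two cases handle both fragments, $\bol{L}(/\to,\backslash\to)$ merely contributing the second permutable left rule.

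I expect no genuine obstacle; what the argument demands is diligence rather than ingenuity. The only points needing care are the bookkeeping of antecedent positions in the two permutation steps --- tracking exactly where the distinguished occurrence of $\alpha$ sits inside $\Gamma,\alpha,\Theta$, and verifying that the reconstructed $(/\to)$ or $(\backslash\to)$ instance splices the contexts back together as required --- together with the trivial observations that each permuted proof stays within the fragment (only the two admitted rules are reused, and no antecedent is ever emptied). The delicate case that makes the full-calculus proof of~\cite{lambek1958mathematics} nontrivial --- a principal cut in which a connective introduced on the right by $(\to/)$, $(\to\backslash)$ or $(\to\cdot)$ meets its left-introduction and must be decomposed into cuts on strictly smaller subformulae --- simply cannot arise here, precisely because these fragments have no right-introduction rule, so outside of an axiom the cut formula is never principal in the right premise.
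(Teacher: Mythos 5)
Your proposal is correct and follows essentially the same route as the paper's sketch: since $\bol{L}(/\to)$ and $\bol{L}(/\to,\backslash\to)$ have no right-introduction rules, the cut formula is never principal in the premise $\Delta\to\alpha$, so the Cut can always be permuted above the last $(/\to)$ or $(\backslash\to)$ inference of that premise until it meets an axiom, with the left premise left untouched. The only (inessential) difference is the termination measure: you induct on the height of the derivation of $\Delta\to\alpha$, whereas the paper counts the $/$ connectives remaining in $\Delta$.
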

\vspace*{- 0.5 em}
\noindent\textit{Sketch of proof.} To illustrate, we present the sequent replacement procedure for  $\bol{L}(/\to)$. 
\begin{center}
  \begin{bprooftree}
    \AxiomC{$\Gamma, \alpha,\Theta \to \beta; $}
    \AxiomC{$\Delta \to \alpha$}
    \RightLabel{Cut}
    \BinaryInfC{$\Gamma, \Delta, \Theta \to \beta$}
  \end{bprooftree}
  $\quad \Rightarrow \quad$
  \begin{bprooftree}
  \AxiomC{$\Gamma, \alpha, \Theta \to \beta;$}
  \AxiomC{$\Delta' \to \alpha$}
  \RightLabel{Cut'}
  \BinaryInfC{$\Gamma, \Delta', \Theta \to \beta ; $}
  \AxiomC{$\Xi \to \alpha'$}
  \RightLabel{$(/\to)$}
  \BinaryInfC{$\Gamma, \Delta, \Theta \to \beta $}
\end{bprooftree}
\end{center}
Assume that the premises of the Cut on the left are provable without Cut. Then, the last step in the derivation of $\Delta \to \alpha$ is the $(/\to)$ as shown below. 
\begin{center}
  \begin{bprooftree}
      \AxiomC{$\Delta' \to \alpha;$}
      \AxiomC{$\Xi \to \alpha'$}
      \RightLabel{$(/\to)$}
      \BinaryInfC{$\Delta \to \alpha$}
  \end{bprooftree}
\end{center}
Readers can verify that the replacement of Cut by a `smaller' Cut (Cut' on the right) is possible due to the assumed Cut-free provability of relevant premises, noting in particular that $\Delta'$ contains one less $/$ connective than $\Delta$. One can thus repeat this until all premises are instances of Axiom. The procedure is analogous for $\bol{L}(/ \to, \backslash \to)$. We now state our main theorem.

\begin{them}\vspace*{- 0 em}
 The following three pairs of Lambek-fragment grammar and formal grammar (without the empty string, $\epsilon$) are of equivalent expressive power.
 \begin{align*}
     \text{$\bol{L}(/\to)$-grammar with $Tp(/)$ } &\Leftrightarrow \text{CFG} \\
     \text{$\bol{L}(/ \to, \backslash \to)$-grammar with $Tp_1(/,\backslash)$} &\Leftrightarrow \text{LCFG} \\
     \text{$\bol{L}(/ \to)$-grammar with $Tp_1(/)$} &\Leftrightarrow \text{REG}
 \end{align*}
\end{them}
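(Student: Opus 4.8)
\noindent\emph{Proof proposal.}
Each of the three equivalences asserts that two classes of languages coincide, so it splits into a \emph{formal $\subseteq$ Lambek} and a \emph{Lambek $\subseteq$ formal} inclusion. The plan is to settle the second inclusion uniformly by a subformula-based translation of $\bol{L}(/\to)$-grammars into context-free grammars --- this is where Theorem~\ref{Cut-Elim} is indispensable --- and the first by exhibiting the Lambek-side analogue of the Greibach Normal Form, then checking that the subformula translation, applied to the Lambek grammar produced this way, returns the Greibach grammar one started from. The linear and regular lines will then follow by simply inspecting what the degree bound does to the formulae occurring in the first construction.

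For \emph{Lambek $\subseteq$ formal}, given a $\bol{L}(/\to)$-grammar $\mathcal{G}=(Pr,V,S_\mathcal{G},f)$ whose types lie in $Tp(/)$, I would collect the finite set $\mathrm{Sub}$ of subformulae of the types in $\bigcup_{a\in V}f(a)\cup\{S_\mathcal{G}\}$ and define a CFG $G$ with non-terminals $\{[\delta]:\delta\in\mathrm{Sub}\}$, start symbol $[S_\mathcal{G}]$, and productions $[\beta]\to[\beta/\alpha]\,[\alpha]$ for each $\beta/\alpha\in\mathrm{Sub}$, together with $[\delta]\to a$ for each $a\in V$ and $\delta\in f(a)$. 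Soundness ($L(G)\subseteq L(\mathcal{G})$) is routine: $[\delta]\to a$ is backed by the Axiom $\delta\to\delta$ with $\delta\in f(a)$, and $[\beta]\to[\beta/\alpha][\alpha]$ by the sequent $(\beta/\alpha),\alpha\to\beta$ (a single $(/\to)$ on two Axioms) plus two Cuts, so an induction on $G$-derivations yields, for every generated word $w$, a type sequence in $f^+(w)$ reducible to $S_\mathcal{G}$ in $\bol{L}$. Completeness ($L(\mathcal{G})\subseteq L(G)$) is the crux. Passing to a Cut-free proof (Theorem~\ref{Cut-Elim}) of $\alpha_1,\dots,\alpha_n\to\delta$, all of whose formulae then lie in $\mathrm{Sub}$ by the subformula property, I would show by induction on the proof that $[\delta]\Rightarrow^{*}_{G}[\alpha_1]\cdots[\alpha_n]$: when the last rule is $(/\to)$ we have $\Gamma=\Delta,(\beta/\alpha),\Gamma',\Theta$ with subproofs of $\Gamma'\to\alpha$ and $\Delta,\beta,\Theta\to\delta$, the hypothesis gives $[\delta]\Rightarrow^{*}[\Delta][\beta][\Theta]$ and $[\alpha]\Rightarrow^{*}[\Gamma']$, whence $[\delta]\Rightarrow^{*}[\Delta][\beta/\alpha][\alpha][\Theta]\Rightarrow^{*}[\Delta][\beta/\alpha][\Gamma'][\Theta]$, the surrounding contexts $\Delta,\Theta$ causing no trouble precisely because CFG-derivations may rewrite any occurrence of a non-terminal. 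Appending $[\alpha_i]\to a_i$ gives $[S_\mathcal{G}]\Rightarrow^{*}w$, and in fact $L(\mathcal{G})=L(G)$, so every language of a $\bol{L}(/\to)$-grammar with $Tp(/)$ is context-free.

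For \emph{formal $\subseteq$ Lambek}, I would first place an ($\epsilon$-free) CFG in Greibach Normal Form (productions $A\to aB_1\cdots B_k$, $k\ge 0$) and read it off as a Lambek grammar: $Pr:=N$, distinguished type $S_G$, and $f(a):=\{\,(\dots((A/B_k)/B_{k-1})\dots/B_1): (A\to aB_1\cdots B_k)\in P\,\}$ (just $A$ when $k=0$). These types --- built from primitives using $/$ only --- are the claimed Greibach-type normal form of Lambek grammar. One inclusion, $L(G)\subseteq L(\mathcal{G})$, is a direct induction on parse trees: from $A\to aB_1\cdots B_k$ with $w=aw_1\cdots w_k$ and $B_i\Rightarrow^{*}w_i$, the hypothesis supplies sequences in $f^+(w_i)$ reducible to $B_i$, and $k$ applications of $(/\to)$ assemble one reducible to $A$. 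The reverse inclusion comes \emph{without any interpolation argument}: applying the subformula construction above to this very $\mathcal{G}$ returns, up to the renaming $[X]\mapsto X$ and harmless identification of coinciding tails, the Greibach grammar itself, because the only subformulae of the assigned types of the shape $\rho/\sigma$ are the tails $(\dots(A/B_k)\dots/B_{i+1})$, whose $[\beta]\to[\beta/\alpha][\alpha]$ productions chain back into exactly $A\Rightarrow^{*}aB_1\cdots B_k$; so $L(\mathcal{G})=L(G)$ by the previous paragraph.

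For the linear and regular lines, I would observe that if every type has degree at most $1$ then $\beta/\alpha\in\mathrm{Sub}$ (or $\alpha\backslash\beta\in\mathrm{Sub}$) forces $\alpha,\beta\in Pr$, so in the subformula CFG the non-primitive non-terminals $[A/B],[C\backslash B]$ occur only on the right-hand sides of $[A]\to[A/B][B]$ and $[B]\to[C][C\backslash B]$ and rewrite only to single terminals; inlining them collapses $G$ to productions $[A]\to a[B]$, $[A]\to[B]a$, $[A]\to a$, a linear grammar, and a right-regular one when $\backslash$ (hence the rule $(\backslash\to)$, hence the $[A]\to[B]a$ productions) is unavailable. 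Combined with the formal-grammar normal forms $A\to aB\mid Ba\mid a$ (linear) and $A\to aB\mid a$ (right-regular) and the same round-trip observation as above, this yields the two remaining equivalences. The step I expect to be the real obstacle is the completeness half of the subformula translation: one must be confident that the ``empty side context'' instances of $(/\to)$ recorded by $[\beta]\to[\beta/\alpha][\alpha]$ already generate every Cut-free proof, which hinges on both the subformula property (available only after Cut-elimination) and a carefully set-up induction with sequents read as strings of bracketed formulae. Everything downstream --- the Greibach analogue, the round-trip, the degree-$1$ collapse --- is then essentially bookkeeping.
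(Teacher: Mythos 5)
Your proposal is correct, but it reaches the theorem by a noticeably different route than the paper. The paper's engine is Lemma~\ref{lem:uni-reducible}, a normal-form description of the sequents provable in $\bol{L}(/\to)$: every antecedent reducible to $S_\mathcal{G}$ starts with a ``leading type'' $(\cdots(S/\beta_n)\cdots)/\beta_1$ followed by blocks reducing to the $\beta_k$. Both directions of Proposition~\ref{CFG-Lambek} are then read off from this lemma, so the CFG extracted from a Lambek grammar has Greibach-shaped productions $\alpha \to a\beta_1\cdots\beta_n$ (one per assigned type, with non-terminals the subtypes), and the linear and regular cases (Proposition~\ref{linear-prop} and its corollary) are handled by a separate, direct degree-$1$ construction. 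You instead skip any reducibility lemma: your Lambek-to-CFG translation is the binary ``application'' grammar $[\beta]\to[\beta/\alpha][\alpha]$, $[\delta]\to a$ over subformulae, with completeness proved by induction on a Cut-free proof (Theorem~\ref{Cut-Elim} plus the subformula property) and soundness by Cut; for the converse inclusion in the GNF direction you use a round-trip argument (the translation applied to the GNF-derived Lambek grammar reproduces the GNF grammar), where the paper instead re-applies Lemma~\ref{lem:uni-reducible}; and you obtain the LCFG/REG cases by inlining the degree-$1$ subformula grammar rather than by a fresh construction. Your route is more uniform --- one translation and one induction serve all three classes --- and the proof-induction is arguably more standard than the reducibility analysis; what it gives up is the paper's main conceptual point, the direct Greibach-style reading of both translations (your extracted CFG is Chomsky-like, and the Greibach shape only reappears through the round trip). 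The only step you state loosely is the ``harmless identification of coinciding tails'': it does hold, because a tail $(\cdots(A/B_k)\cdots)/B_{i+1}$ determines both the head $A$ and the remaining suffix $B_{i+1}\cdots B_k$, so any chain of binary productions ending in a terminal production $[\tau]\to a$ with $\tau\in f(a)$ reconstitutes exactly one original production $A\to aB_1\cdots B_k$; spelling that sentence out would close the one remaining gap in an otherwise sound argument.
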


Our proof consists of the two following steps.\\
I. Identification of appropriate type assignments (resp. production rules) given an $\epsilon$-free CFG (resp. Lambek grammar).
\\
II. Showing their correspondence via structural induction on provable sequents whose consequent is the distinguished type.

And it follows an observation of what constraints on provable sequents result from 
the restriction of the calculus to $\bol{L}(/\to)$, which illustrates the conditions under which the antecedent $\Gamma$ {reduces} to $S_\mathcal{G}$.

\begin{lem}\label{lem:uni-reducible}
 (Reducibility condition) 
 Let $\Gamma$ be a non-empty sequence of types in $Tp(/)$.\\
  $\bol{L}(/\to) \vdash \Gamma \to S_\mathcal{G}$ iff
  $\Gamma = \alpha, \Delta_1, ...,\Delta_n$ where
  \par 1. $\alpha$ is of the form $ ( \cdots ((S/\beta_n)/\beta_{n-1})/ \cdots)/\beta_1$ where $\beta_1,..,\beta_n \in Tp(/)$ and 
  \par 2. for all $1 \leq k \leq n$, $\bol{L}(/\to) \vdash \Delta_k \to \beta_k$.

 Moreover, likewise holds for reducibility to any other types besides $S_\mathcal{G}$.
\end{lem}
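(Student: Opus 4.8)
\medskip
\noindent\textit{Proof proposal.}
The plan is to prove the slightly more general statement advertised by the ``moreover'' clause: for every $\tau \in Tp(/)$ and every non-empty sequence $\Gamma$ of types in $Tp(/)$, $\bol{L}(/\to) \vdash \Gamma \to \tau$ iff $\Gamma = \alpha, \Delta_1, \ldots, \Delta_n$ with $\alpha = (\cdots((\tau/\beta_n)/\beta_{n-1})/\cdots)/\beta_1$, each $\beta_k \in Tp(/)$, and $\bol{L}(/\to)\vdash \Delta_k \to \beta_k$ for all $k$; the case $\tau = S_\mathcal{G}$ is the assertion of the lemma. (When $n=0$ this reads $\alpha = \tau$ and $\Gamma = \alpha$.) Both directions go by induction, and the key simplification is Theorem~\ref{Cut-Elim}: every provable sequent of $\bol{L}(/\to)$ has a proof using only the Axiom and the $(/\to)$ rule, so in the inductive arguments the last inference is always one of these two.

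For the ($\Leftarrow$) direction I would induct on $n$. If $n=0$ then $\Gamma = \tau$ and $\tau \to \tau$ is an Axiom. For $n \geq 1$, write $\alpha = \alpha'/\beta_1$ with $\alpha' = (\cdots((\tau/\beta_n)/\cdots)/\beta_2)$; the induction hypothesis gives $\bol{L}(/\to) \vdash \alpha', \Delta_2, \ldots, \Delta_n \to \tau$, and a single application of $(/\to)$ to this sequent and the hypothesis $\Delta_1 \to \beta_1$ (with the rule's left and right contexts taken to be $\Lambda$ and $\Delta_2, \ldots, \Delta_n$) produces $(\alpha'/\beta_1), \Delta_1, \Delta_2, \ldots, \Delta_n \to \tau$, which is $\Gamma \to \tau$.

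For the ($\Rightarrow$) direction, fix a Cut-free proof of $\Gamma \to \tau$ and induct on it. If it is an Axiom then $\Gamma = \tau$ and we take $n = 0$. Otherwise the last rule is $(/\to)$, so $\Gamma = \Delta_a, (\nu/\mu), \Sigma, \Theta_a$ with subproofs of $\Sigma \to \mu$ and of $\Delta_a, \nu, \Theta_a \to \tau$; all types occurring here ($\nu$, $\mu$, those in $\Sigma$) are subtypes of types of $\Gamma$, hence lie in $Tp(/)$, so the induction hypothesis applies to the second premise and yields $\Delta_a, \nu, \Theta_a = \alpha_0, \Xi_1, \ldots, \Xi_m$ with $\alpha_0 = (\cdots((\tau/\gamma_m)/\cdots)/\gamma_1)$ and $\bol{L}(/\to) \vdash \Xi_j \to \gamma_j$. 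Now split on the position occupied by the single type $\nu$ in this decomposition. If $\nu = \alpha_0$ (equivalently $\Delta_a = \Lambda$), then $\nu/\mu = (\cdots(((\tau/\gamma_m)/\cdots)/\gamma_1)/\mu)$ again has the required spine shape (one extra argument $\mu$ appended outermost), $\Theta_a = \Xi_1, \ldots, \Xi_m$, and $\Gamma = (\nu/\mu), \Sigma, \Xi_1, \ldots, \Xi_m$ is the desired decomposition with leading type $\nu/\mu$ and argument blocks $\Sigma, \Xi_1, \ldots, \Xi_m$ against $\mu, \gamma_1, \ldots, \gamma_m$ (using $\Sigma \to \mu$ and $\Xi_j \to \gamma_j$). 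Otherwise $\nu$ falls inside a block, $\Xi_j = \Xi_j', \nu, \Xi_j''$, whence $\Delta_a = \alpha_0, \Xi_1, \ldots, \Xi_{j-1}, \Xi_j'$ and $\Theta_a = \Xi_j'', \Xi_{j+1}, \ldots, \Xi_m$; the leading type $\alpha_0$ and the $\gamma_k$'s are unchanged, and $\Gamma$ decomposes with the $j$-th block replaced by $\Xi_j', (\nu/\mu), \Sigma, \Xi_j''$, which reduces to $\gamma_j$ by one application of $(/\to)$ to the provable sequents $\Xi_j', \nu, \Xi_j'' \to \gamma_j$ and $\Sigma \to \mu$. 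In both cases $\Gamma$ has the claimed form.

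The main obstacle is exactly the case analysis in the ($\Rightarrow$) step: one must observe that the $(/\to)$-active functor $\nu/\mu$ either is applied at the very front of the antecedent, in which case it lengthens the left-nested slash spine of the leading type by one, or is applied strictly inside, in which case it is absorbed into precisely one argument block $\Delta_k$ without disturbing the rest of the decomposition; identifying these two regimes and checking that the block in the second regime still reduces correctly is where the content lies. Everything else is routine once Cut has been eliminated, and indeed the role of Theorem~\ref{Cut-Elim} here is precisely to guarantee that the final inference is $(/\to)$ so that this purely local analysis is exhaustive.
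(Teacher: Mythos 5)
Your proof is correct and follows essentially the same route as the paper: use Cut-elimination (Theorem~\ref{Cut-Elim}) and induct on the resulting derivation, reading off how each $(/\to)$ inference either extends the slash spine of the leading type or is absorbed into one of the argument blocks $\Delta_k$. If anything, your explicit case split on where the principal formula $\nu/\mu$ sits is more complete than the paper's sketch, which only displays the step in which the leading type itself is reduced and leaves the in-block case implicit.
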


\begin{proof}
  Requirement for the form of the leading type $\alpha$ with the left-most type being $S_\mathcal{G}$ is clear from the manner in which the $(/\to)$ rule acts on a type, namely, appending a type on the right after the $/$ symbol. The remainder of the lemma stipulates that the $\alpha$ is followed by a sequence that `reflects' the structure thereof. Formally, we can observe in the inference of the form
  \begin{center}
    \begin{bprooftree}
      \AxiomC{$\Delta_1 \to \alpha_1 ;$}
      \AxiomC{$( \cdots ((S_\bol{G}/\alpha_n)/\alpha_{n-1})/ \cdots)/\alpha_2, \Delta_2, ...,\Delta_n \to S_\mathcal{G}$}
      \RightLabel{$(/\to)$}
      \BinaryInfC{$( \cdots ((S_\bol{G}/\alpha_n)/\alpha_{n-1})/ \cdots)/\alpha_1, \Delta_1, ...,\Delta_n \to S_\mathcal{G}$}
    \end{bprooftree}
  \end{center}
  that for the degree of the leading type to be reduced (seeing from bottom to top) by $(/ \to)$, it is sufficient and necessary that the leading type be followed by a sequence $\Delta_1$ such that $\Delta_1 \to \alpha_1$ is provable. The lemma follows from induction on the $n$-steps to reduce $\alpha$ to $S_\mathcal{G}$.
\end{proof}

We now show the language equivalence separately for each class.

\begin{prop}\label{CFG-Lambek}\vspace*{- 0.5 em}
  Lambek grammars based on $\bol{L}(/\to)$ with $Tp(/)$ ($\bol{L}(/\to)$-grammars) recognise exactly context-free languages without the empty string.
\end{prop}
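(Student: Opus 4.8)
The plan is to establish the two inclusions separately, using Lemma~\ref{lem:uni-reducible} as the bridge between provable sequents and generating derivations and Theorem~\ref{Cut-Elim} to license the structural inductions.

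\emph{Every $\epsilon$-free context-free language is recognised by some $\bol{L}(/\to)$-grammar.} Given an $\epsilon$-free CFG $\bol{G} = (N, \Sigma, S_\bol{G}, P)$, I would first pass to a grammar for $L(\bol{G})$ in Greibach Normal Form, i.e.\ one all of whose productions have the shape $A \to a B_1 \cdots B_k$ with $a \in \Sigma$, $B_1, \dots, B_k \in N$ and $k \geq 0$; this is possible precisely because $\bol{G}$ is $\epsilon$-free. Then I would take the non-terminals themselves as the primitive types, $Pr := N$, set $S_\mathcal{G} := S_\bol{G}$ and $V := \Sigma$, and define $f(a) := \{\, (\cdots((A/B_k)/B_{k-1})/\cdots)/B_1 \;:\; (A \to a B_1 \cdots B_k) \in P \,\} \subseteq Tp(/)$ (for $k = 0$ this places the primitive type $A$ itself in $f(a)$). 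The heart of the argument is the claim that for every $A \in N$ and every $w \in V^+$ one has $A \Rightarrow^* w$ in $\bol{G}$ if and only if there is $\Gamma \in f^+(w)$ with $\bol{L}(/\to) \vdash \Gamma \to A$, which I would prove by induction --- on the length of a derivation in one direction and, invoking Theorem~\ref{Cut-Elim}, on the size of a Cut-free proof in the other. In both directions the inductive step is exactly Lemma~\ref{lem:uni-reducible}: applying a production $A \to a B_1 \cdots B_k$ to a leftmost non-terminal corresponds to exhibiting $(\cdots((A/B_k)/\cdots)/B_1$ as the leading type at the head of the type sequence, followed by blocks $\Delta_1, \dots, \Delta_k$ with $\bol{L}(/\to) \vdash \Delta_j \to B_j$. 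Specialising the claim to $A = S_\bol{G}$ gives $L(\mathcal{G}) = L(\bol{G})$; it also exhibits these productions as the exact analogue of the Greibach Normal Form on the type-logical side.

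\emph{Every $\bol{L}(/\to)$-grammar recognises a context-free language.} Given $\mathcal{G} = (Pr, V, S_\mathcal{G}, f)$, let $\mathrm{Sub}(\mathcal{G})$ be the finite set of all subformulae of the finitely many types occurring in $\bigcup_{a \in V} f(a)$. I would build a CFG with non-terminal set $\{\, \overbar{\gamma} : \gamma \in \mathrm{Sub}(\mathcal{G}) \,\} \cup \{\overbar{S_\mathcal{G}}\}$, terminal alphabet $V$ and start symbol $\overbar{S_\mathcal{G}}$, whose productions record how a single $(/\to)$ step can consume the leading type: for each $a \in V$, each $\alpha = (\cdots((A/\beta_n)/\cdots)/\beta_1 \in f(a)$ with $A$ primitive, and each $0 \leq p \leq n$, include the rule $\overbar{\gamma_p} \to a\, \overbar{\beta_1} \cdots \overbar{\beta_p}$, where $\gamma_p := (\cdots((A/\beta_n)/\cdots)/\beta_{p+1}$ (so $\gamma_n = A$, and for $p = 0$ the rule is simply $\overbar{\alpha} \to a$). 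Every right-hand side contains a terminal, so this CFG is $\epsilon$-free. I would then prove, once more by induction using Theorem~\ref{Cut-Elim} and Lemma~\ref{lem:uni-reducible}, that $\overbar{\gamma} \Rightarrow^* w$ in the CFG if and only if there is $\Gamma \in f^+(w)$ with $\bol{L}(/\to) \vdash \Gamma \to \gamma$, for every $\gamma \in \mathrm{Sub}(\mathcal{G})$; taking $\gamma = S_\mathcal{G}$ shows the constructed CFG generates exactly $L(\mathcal{G})$.

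The principal difficulty I anticipate is bookkeeping rather than conceptual. In the Lambek-to-CFG direction, the leading type isolated by Lemma~\ref{lem:uni-reducible} may be only a partial ``peel'' of a type assigned by $f$, which is why the production set must contain the whole family $\overbar{\gamma_p} \to a\,\overbar{\beta_1}\cdots\overbar{\beta_p}$ for $0 \le p \le n$ and not merely the fully peeled rule; one must also check that $\mathrm{Sub}(\mathcal{G})$ is closed under everything appearing in these rules, which holds because each $\beta_j$ and each $\gamma_p$ is a subformula of $\alpha$. In the CFG-to-Lambek direction, the care needed is that Greibach Normal Form is taken for the $\epsilon$-free grammar before translation, so that every non-terminal derives a non-empty string and the blocks $\Delta_j$ delivered by Lemma~\ref{lem:uni-reducible} are genuinely non-empty, keeping the induction well-founded. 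With these points settled, the remaining verifications are routine unwindings of the two cited results.
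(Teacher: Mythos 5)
Your proposal is correct and follows essentially the same route as the paper: Greibach Normal Form with the identical type assignment in one direction, and subtypes as non-terminals with the "peeling" productions in the other, with Lemma~\ref{lem:uni-reducible} and Cut-elimination driving the structural inductions. Your explicit indexing over partial peels $0 \le p \le n$ and the strengthened claim for every non-terminal/type just spell out what the paper's construction and its accompanying remark leave implicit.
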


\begin{proof} \phantom{ }\\
(CFG $\Rightarrow$ $\bol{L}(/\to)$ grammar)

Let $\bol{G} = (N, \Sigma, P, S_\bol{G})$ be a CFG recognising an $\epsilon-$free language.
Assume Greibach Normal Form~\cite{greibach1965new}, such that all production rules are $A \to a $ or $A \to aB_1 \cdots B_n$ for some $A, B_1, ..., B_n \in N$ and $a \in \Sigma$.
Consider an $\bol{L}(/\to)$ grammar $\mathcal{G} = (Pr, V, S_\mathcal{G}, f)$ constructed by identifying
$Pr = N$,
$V = \Sigma$,
$S_\mathcal{G} = S_\bol{G}$ (hereafter $S$) and defining $f:V \to \Omega^{Tp(/)}$ as follows.
For any $a \in V$, $f(a) \subseteq Tp(/)$ is the smallest set such that
$A \in f(a)$ if $A \to a \in P$ and
$(\cdots((A/B_n)/B_{n-1})/\cdots)/B_1 \in f(a)$
if $A \to aB_1 ... B_{n-1}B_n \in P $.

Lemma \ref{lem:uni-reducible} implies that any type assigned to some symbol which is potentially reducible to $S$ corresponds to production rules with $S$ on the left. Thus, consider such a production rule $S \to a_1 A_1\cdots A_n$.
Note that such a rule stipulates that if the derivation of some string $w = a_1...a_n$ begins with the rule, then some string which can be generated from $A_1$ shall be on the right of $a_1$. For each step in the generation of some string $w'$ from $A_1$, there shall be a corresponding type-assignment. Recursive application of Lemma \ref{lem:uni-reducible} to all such rules and induction on the finite length of generation implies language equivalence. Let us now consider the converse.\\
($\bol{L}(/\to)$ grammar $\Rightarrow$ CFG)

Let $\mathcal{G} = (Pr, V, S_\mathcal{G}, f)$ be an arbitrary $\bol{L}(/\to)$-grammar. Construct a CFG, $\bol{G} = (N, \Sigma, P, S_\bol{G})$  by letting: $\Sigma = V$, $S_\bol{G} =  S_\mathcal{G} $, $N = \overline{f}(V)$, the set of all sub-types\footnote{Considering a type $\alpha$ as a string, its substring $\alpha'$ is a subtype of $\alpha$ if it is a type.} of all types assigned to symbols in $V$ by $f:V \to \Omega^{Tp(/)}$ which is defined as follows. 
  For any $a \in V$
    \begin{itemize}
      \item[] $\alpha \to a\beta_1 \beta_2 \cdots \beta_n \in P$ if $(\cdots ((\alpha/\beta_n)/\beta_{n-1})/\cdots)/\beta_1 \in f(a)$ and
      \item[] $\alpha \to a \in P$ if $\alpha \in f(a)$ .
    \end{itemize}
Let us see that the grammars thereby constructed recognise the same language.
Consider a production rule of the form
$S_\bol{G} \to a\beta_1 \beta_2 \cdots  \beta_n$ with $S_\bol{G}$ on the left-hand side.
The application of Lemma \ref{lem:uni-reducible} to the corresponding type assignment
$( \cdots ((S_\bol{G}/\beta_n)/\beta_{n-1})/ \cdots) /\beta_1 \in  f(a)$ and recursively to the type assignments that correspond to production rules with $\beta_1, \beta_2,...$ or $\beta_n$ on the left-hand side and so on, implies the language equivalence by induction on the length of Cut-free derivation of any provable sequents assignable to strings.
\end{proof}

\begin{rem}
  The key difference between the proof of CFG $\Rightarrow \bol{L}(/\to) $ and  $\bol{L}(/\to)\Rightarrow $ CFG  is the definition of non-terminals $N$. 
  This is due to the fact that in the latter, we must take into account {derived types of all possible forms}, whereas in the former, we may construct those types derivable by introducing only $/$ followed by a \textit{primitive} type, which means that the change of degree is exactly by one with every application of $(/\to)$.
\end{rem}

The construction of a Lambek grammar for linear context-free languages is more straightforward as the relevant types are smaller.

\begin{prop}\label{linear-prop}
  $\bol{L}(/\to, \backslash \to)$-grammars with types restricted to $Tp_1(/,\backslash)$ recognise exactly linear languages.
\end{prop}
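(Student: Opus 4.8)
The plan is to mirror the two-step method of Proposition~\ref{CFG-Lambek}, exploiting that the degree-$1$ restriction forces every non-primitive type in $Tp_1(/,\backslash)$ to have the shape $A/B$ or $B\backslash A$ with $A,B\in Pr$. First I would record the analogue of Lemma~\ref{lem:uni-reducible} for $\bol{L}(/\to,\backslash\to)$ with types in $Tp_1(/,\backslash)$: a non-empty sequence $\Gamma=\gamma_1,\dots,\gamma_n$ is Cut-freely reducible to a primitive $X$ iff $\Gamma$ has a ``caterpillar'' shape --- exactly one $\gamma_j$ is primitive, say $Z_0$, there is a chain of primitives $Z_0,Z_1,\dots,Z_{n-1}=X$, each $\gamma_p$ with $p<j$ is a left-edge $Z_i/Z_{i-1}$ and each $\gamma_p$ with $p>j$ a right-edge $Z_{i-1}\backslash Z_i$, with the levels $i$ forming a bijection onto $\{1,\dots,n-1\}$ that increases as $p$ moves away from $j$ along either side. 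By Theorem~\ref{Cut-Elim} one may assume Cut-freeness, and the lemma follows by induction on the number of connectives, inspecting the last rule ($(/\to)$ or $(\backslash\to)$): since the cut formula $\beta$ in that rule is primitive, it is forced to be the unique primitive of its premise, which pins down the single spine and excludes the ``branching'' that the unrestricted $Tp(/)$ of Proposition~\ref{CFG-Lambek} permits.

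For LCFG $\Rightarrow$ Lambek grammar, I would first put the given $\epsilon$-free linear grammar into the normal form whose rules are $A\to a$, $A\to aB$ or $A\to Ba$ (the linear-grammar counterpart of the Greibach form used in Proposition~\ref{CFG-Lambek}), then set $Pr=N$, $V=\Sigma$, $S_\mathcal{G}=S_\bol{G}$ and let $f(a)$ be the least set with $A\in f(a)$ when $A\to a\in P$, $(A/B)\in f(a)$ when $A\to aB\in P$, and $(B\backslash A)\in f(a)$ when $A\to Ba\in P$ --- all of which lie in $Tp_1(/,\backslash)$. A linear derivation $S_\bol{G}\Rightarrow^{*}w$ produces the terminals of $w=a_1\cdots a_n$ one at a time at the current left or right frontier, finishing with a single terminal rule ``in the middle''; reading off those choices yields precisely a caterpillar sequence $\Gamma\in f^{+}(w)$ with $\Gamma\to S_\mathcal{G}$, and conversely the caterpillar lemma turns any such reduction back into a linear derivation. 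Induction on the length of the derivation (equivalently on $|\Gamma|$) gives $L(\mathcal{G})=L(\bol{G})$.

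For Lambek grammar $\Rightarrow$ LCFG the construction is the mirror image: given $\mathcal{G}=(Pr,V,S_\mathcal{G},f)$ with types in $Tp_1(/,\backslash)$, take $\Sigma=V$, $S_\bol{G}=S_\mathcal{G}$, $N=Pr$, and for each $a\in V$ put $A\to a\in P$ when $A\in f(a)$, $A\to aB\in P$ when $(A/B)\in f(a)$, and $A\to Ba\in P$ when $(B\backslash A)\in f(a)$; every rule is terminal, right-linear or left-linear, so $\bol{G}$ is an LCFG. Applying the caterpillar lemma to a Cut-free proof of $\Gamma\to S_\mathcal{G}$ with $\Gamma\in f^{+}(w)$ reads off a linear derivation of $w$, and the converse direction is immediate; induction on the length of the Cut-free derivation yields $L(\bol{G})=L(\mathcal{G})$, completing the equivalence.

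The main obstacle is the caterpillar lemma --- proving that the degree-$1$ bound genuinely collapses the CFG-style proof structure of Proposition~\ref{CFG-Lambek} to a single non-branching spine, which is exactly what makes the recognised class \emph{linear} rather than fully context-free. The delicate points are that the cut formula of each $(/\to)$ or $(\backslash\to)$ step is primitive and hence necessarily the spine vertex of its premise (so no side sequences carrying their own sub-derivations can survive), and the bookkeeping of the interleaving of left- and right-edges around the core so that the ``level increases away from the core'' invariant is preserved under the inductive step. A minor additional point, should the LCFG of the statement be read in its general form $A\to uBv$ rather than the single-terminal form, is to invoke the standard normal-form theorem reducing it to rules $A\to a$, $A\to aB$, $A\to Ba$.
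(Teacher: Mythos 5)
Your proposal is correct and takes essentially the same route as the paper: the identical dictionary ($A\in f(a)$ iff $A\to a$, $A/B\in f(a)$ iff $A\to aB$, $B\backslash A\in f(a)$ iff $A\to Ba$) in both directions, with correctness resting on Cut-elimination plus a structural characterisation of the provable sequents over $Tp_1(/,\backslash)$ and induction on derivation/antecedent length. Your ``caterpillar'' lemma is just the globally unrolled form of the paper's recursive reducibility observation ($\Gamma\to S$ iff $\Gamma=S$, $S/A,\Delta$ or $\Delta,A\backslash S$ with $\Delta\to A$ provable), stated and justified in somewhat more explicit detail than the paper gives.
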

\begin{proof}
  (Construction)
  Let $\bol{G} = (N, \Sigma, P, S_\bol{G}) $ be an $\epsilon-$free LCFG.
  We construct a corresponding $\bol{L}(/\to, \backslash \to)$-grammar $\mathcal{G} = (Pr, V, S_\mathcal{G}, f) $ with types of degree less than or equal to 1, and vice versa, by letting: $Pr = N$, $V = \Sigma$, $S_\mathcal{G} = S_\bol{G}$ (hereafter, simply $S$) and $f:V \to Tp_1(/,\backslash)$ (conversely, $P$) be defined as follows. For any $a \in V$
  \begin{itemize}
    \item[] $A/B \in f(a)$ iff $A\to aB \in P$
    \item[] $B \backslash A \in f(a)$ iff $A \to Ba \in P$ and
    \item[] $A \in f(a)$ iff $A \to a \in P$.
  \end{itemize}
  (Correctness) Let $w = a_1...a_n$. By definition, $w \in \mathcal{L(G)}$ iff $\exists \Gamma \in f^+(w) \subset Tp_1(/,\backslash)^n$ such that $\Gamma \to S$ is provable using only $(/ \to)$, $(\backslash \to)$ and Cut.
  Note that $\bol{L}(/ \to, \backslash \to) \vdash \Gamma \to S$ iff $\Gamma$ has the form;
  $S/A, \Delta$,  
  $\Delta, A\backslash S$
  for some $A \in Pr$ and $\Delta \in Tp_1(/,\backslash)^+$ such that $\bol{L}(/ \to, \backslash \to) \vdash \Delta \to A$ (due to the eliminabilty of Cut), if not trivially $\Gamma = S$.
  Thus, when $\Gamma \neq S$, $\bol{L}(/ \to, \backslash \to) \vdash \Gamma \to S$ iff $P$ includes a production rule of the form either
  $S \to a_1 A \in P$ or $S \to A a_n \in P$. 
 
  \begin{center}
    \begin{bprooftree}
        \AxiomC{$\vdots$}
        \UnaryInfC{$\Delta \to A;$}
        \AxiomC{$S \to S$}
        \RightLabel{$(/\to)$}
        \BinaryInfC{$S/A, \Delta \to S$}
    \end{bprooftree}
    \quad or \quad 
    \begin{bprooftree}
        \AxiomC{$\vdots$}
        \UnaryInfC{$\Delta \to A;$}
        \AxiomC{$S \to S$}
        \RightLabel{$(\backslash \to)$}
        \BinaryInfC{$\Delta, A\backslash S \to S$}
    \end{bprooftree}
 \end{center}
Likewise holds for the provability of the sequent $\Delta \to A$.
Now, as $\Gamma$ is finite and $|\Delta| < |\Gamma|$, we shall find a finite number of corresponding applications of production rules with each proof step by $(/\to)$ or $(\backslash \to)$ until the length of the antecedent is reduced to one.
\end{proof}

\begin{cor}
  $\bol{L}(/\to)$-grammars with types restricted to $Tp_1(/)$ recognise exactly regular languages.
\end{cor}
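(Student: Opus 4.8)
The plan is to obtain the corollary as a direct specialisation of the proof of Proposition~\ref{linear-prop}: discard the left-linear production rules on the grammar side and the $(\backslash\to)$ rule (hence also all $\backslash$-typed formulae) on the calculus side, so that only right-linear behaviour survives. Concretely, given an $\epsilon$-free right-regular grammar $\bol{G} = (N, \Sigma, P, S_\bol{G})$---all of whose rules have the form $A \to aB$ or $A \to a$---I would build the $\bol{L}(/\to)$-grammar $\mathcal{G} = (Pr, V, S_\mathcal{G}, f)$ by setting $Pr = N$, $V = \Sigma$, $S_\mathcal{G} = S_\bol{G}$ and, for each $a \in V$, $A/B \in f(a)$ iff $A \to aB \in P$ and $A \in f(a)$ iff $A \to a \in P$; in the converse direction the same biconditional reads off a regular $P$ from an arbitrary $\bol{L}(/\to)$-grammar with types in $Tp_1(/)$.

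The core of the correctness argument is the observation that restricting to $Tp_1(/)$ collapses Lemma~\ref{lem:uni-reducible} to its minimal instance. If $\Gamma \in Tp_1(/)^+$ and $\bol{L}(/\to) \vdash \Gamma \to S$, then either $\Gamma = S$ (a length-one antecedent, matching a terminal rule $A \to a$), or the leading type $\alpha = (\cdots(S/\beta_n)/\cdots)/\beta_1$ of Lemma~\ref{lem:uni-reducible} has degree at most $1$, which forces $n = 1$ and $\beta_1 \in Pr$; thus $\Gamma = S/A,\Delta$ for some $A \in Pr$ with $\bol{L}(/\to)\vdash \Delta \to A$, and $\Delta \in Tp_1(/)^+$ with $|\Delta| = |\Gamma| - 1$. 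By Theorem~\ref{Cut-Elim} for $\bol{L}(/\to)$ this is exactly the shape of a cut-free derivation whose final inference is $(/\to)$, so one application of $(/\to)$ strips a single leading $S/A$ and leaves a strictly shorter antecedent that must itself reduce to a primitive type. The reduction therefore proceeds one symbol at a time, in lock-step with a right-linear derivation $S \Rightarrow a_1 A_1 \Rightarrow a_1 a_2 A_2 \Rightarrow \cdots$, and a routine induction on $|\Gamma|$ yields $w \in \mathcal{L}(\mathcal{G})$ iff $w \in \mathcal{L}(\bol{G})$ in both directions.

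The step I expect to require the most care is the degree bookkeeping: one must verify that $d(S/A) \le 1$ genuinely rules out any further nesting (so that indeed $n = 1$ and the lone $\beta_1$ is primitive), and that the base case $\Gamma = S$, corresponding to a terminal rule, is treated separately rather than folded into the $(/\to)$ step---the same split between ``$A \to aB$'' and ``$A \to a$'' already present in Proposition~\ref{linear-prop}. A minor remaining point is that it suffices to treat right-regular grammars, since right- and left-regular grammars define the same language class (as noted after the definition of REG) and our construction invokes only $(/\to)$; no use of $(\backslash\to)$ or $\backslash$ is needed. Granting these observations, the corollary drops out of the specialised Proposition~\ref{linear-prop} argument.
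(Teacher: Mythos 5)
Your proposal is correct and matches the paper's proof, which simply specialises the construction and correctness argument of Proposition~\ref{linear-prop} by omitting the left-linear rules (and hence $(\backslash\to)$ and $\backslash$-types); your degree-bookkeeping via Lemma~\ref{lem:uni-reducible} and the induction on $|\Gamma|$ is exactly the ``analogous argument'' the paper invokes, just spelled out in more detail.
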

\begin{proof}
  The construction of corresponding grammars is identical to that in the proof of Proposition \ref{linear-prop} except for the omission of left-linear rules. Correctness likewise follows from the analogous argument.
\end{proof}

\subsection*{Discussion}
The result presented here shows the Lambek grammar's sensitivity to the restrictions on the type degree and directionality.
The unidirectional $\bol{L}(/\to)$-grammar is the simplest Lambek grammar with context-free complexity and may be considered as the proof-theoretic analogue of the Greibach normal form~\cite{greibach1965new}. We further note that the type degree restriction to one naturally corresponds to the (bi)linearity of production rules. Our primary contribution here is thus the direct translation between inference rules of logic and production rules of formal grammar.
Though the language equivalences themselves are not particularly surprising, we believe the directness of the correspondence
equips us with an intuition to extend the result to related and more general classes of interesting problems. 
Promising directions include i. further work on fine-grained descriptive complexity of other (non-commutative) linear logic fragments, ii. identification of Lambek grammars equivalent to star-free languages, mildly context-sensitive languages, Lindenmayer systems~\cite{lindenmayer1972developmental}, etc. and iii. the interaction between the semantics of the linear logic and geometric group theoretic characterisations of formal languages facilitated by an analogous construction of a type-logical grammar.

\newpage 
\bibliographystyle{alpha}
\bibliography{ref.bib}

\begin{thebibliography}{LMSS92}

\bibitem[Abr90]{abrusci1990comparison}
V~Michele Abrusci.
\newblock A comparison between lambek syntactic calculus and intuitionistic linear propositional logic.
\newblock {\em Mathematical Logic Quarterly}, 36(1):11--15, 1990.

\bibitem[Cho63]{chomsky1963formal}
Noam Chomsky.
\newblock Formal properties of grammars.
\newblock {\em Handbook of Math. Psychology}, 2:328--418, 1963.

\bibitem[Gir87]{girard1987linear}
Jean-Yves Girard.
\newblock Linear logic.
\newblock {\em Theoretical Computer Science}, 50(1):1--102, 1987.

\bibitem[Gre65]{greibach1965new}
Sheila~A Greibach.
\newblock A new normal-form theorem for context-free phrase structure grammars.
\newblock {\em Journal of the ACM}, 12(1):42--52, 1965.

\bibitem[Kan91]{kanovich1991multiplicative}
Max~I Kanovich.
\newblock The multiplicative fragment of linear logic is np-complete.
\newblock 1991.
\newblock (No full text available.).

\bibitem[Lam58]{lambek1958mathematics}
Joachim Lambek.
\newblock The mathematics of sentence structure.
\newblock {\em The American Mathematical Monthly}, 65(3):154--170, 1958.

\bibitem[Lin95]{lincoln1995deciding}
Patrick~D Lincoln.
\newblock Deciding provability of linear logic formulas.
\newblock {\em London Mathematical Society Lecture Note Series}, pages 109--122, 1995.

\bibitem[LMSS92]{lincoln1992decision}
Patrick~D Lincoln, John Mitchell, Andre Scedrov, and Natarajan Shankar.
\newblock Decision problems for propositional linear logic.
\newblock {\em Annals of Pure and Applied Logic}, 56(1-3):239--311, 1992.

\bibitem[LR72]{lindenmayer1972developmental}
Aristid Lindenmayer and Grzegorz Rozenberg.
\newblock Developmental systems and languages.
\newblock In {\em Proceedings of the fourth annual ACM symposium on theory of computing}, pages 214--221, 1972.

\bibitem[Pen93]{pentus1993lambek}
Mati Pentus.
\newblock Lambek grammars are context free.
\newblock In {\em Proceedings Eighth Annual IEEE Symposium on Logic in Computer Science}, pages 429--433. IEEE, 1993.

\bibitem[Pen97]{pentus1997product}
Mati Pentus.
\newblock Product-free lambek calculus and context-free grammars.
\newblock {\em The Journal of Symbolic Logic}, 62(2):648--660, 1997.

\bibitem[Pen06]{pentus2006lambek}
Mati Pentus.
\newblock Lambek calculus is np-complete.
\newblock {\em Theoretical Computer Science}, 357(1-3):186--201, 2006.

\end{thebibliography}

\subsection*{Acknowledgements}
This work was supported by JSPS KAKENHI Grant Number 24K16077.
YN thanks the Neural Circuits and Computations Unit, RIKEN Center for Brain Science, for providing a friendly working space and colleagues from various centres of RIKEN for stimulating questions. The authors further acknowledge Naoki Negishi for participating in our weekly discussions and thank the anonymous reviewers for fruitful recommendations that led to the present form of this paper.
\end{document}